\documentclass[a4paper,UKenglish,cleveref, autoref, thm-restate,nolineno]{socg-lipics-v2021}

\pdfoutput=1 
\hideLIPIcs

\usepackage{microtype}
\usepackage{upgreek}
\usepackage{amsmath, amssymb}
\usepackage{tcolorbox}
\usepackage[ruled,vlined]{algorithm2e}
\usepackage{amsmath}
\usepackage{tikz}
\usetikzlibrary{shapes.geometric}

\newcommand{\fO}{\mathcal{O}}

\newcommand{\cB}{\mathcal{B}}

\title{Exact algorithms for optimal discretization}

\author{L\'aszl\'o Kozma}{Faculty of Computer Science, TU Dresden, Germany}{laszlo.kozma@tu-dresden.de}{}{}

\author{Junqi Tan}{Institut für Informatik, Freie Universität Berlin, Germany}{tanjunqi@zedat.fu-berlin.de}{}{}

\authorrunning{L. Kozma and J. Tan}

\renewcommand{\keywordsHeading}{}
\keywords{\mbox{}}

\category{}

\renewcommand{\subjclassHeading}{}

\ccsdesc[100]{\hspace{0pt}}

\Copyright{L\'{a}szl\'{o} Kozma and Junqi Tan}

\category{} 

\relatedversion{}

\nolinenumbers

\EventEditors{Anne Benoit, Haim Kaplan, Sebastian Wild, and Grzegorz Herman}
\EventNoEds{4}
\EventLongTitle{33rd Annual European Symposium on Algorithms (ESA 2025)}
\EventShortTitle{ESA 2025}
\EventAcronym{ESA}
\EventYear{2025}
\EventDate{September 15--17, 2025}
\EventLocation{Warsaw, Poland}
\EventLogo{}
\SeriesVolume{351}
\ArticleNo{109}

\AtBeginDocument{\nolinenumbers}

\begin{document}
\nolinenumbers

\maketitle

\begin{abstract}

The \emph{optimal discretization problem} asks, given two disjoint sets of points $R$ and $B$ in the plane, for a minimal family of horizontal and vertical lines that separate the two sets, so that no cell delimited by the lines contains points from both sets. The problem arises as a pre-processing in supervised machine learning, and has received significant attention in parameterized algorithmics. Answering the question raised by Bonnet, Giannopoulos, and Lampis [IPEC 2017] and Froese [PhD thesis, 2018], it was shown by  Kratsch, Masa\v{r}\'{i}k, Muzi, Pilipczuk, and Sorge [SODA 2021] that optimal discretization admits a fixed-parameter algorithm with running time  $2^{\fO(k^2 \log{k})} \cdot n^{\fO(1)}$, where $k$ is the solution size and $n = |R| + |B|$.

In this paper we give an algorithm for optimal discretization that runs in time $\fO(1.9602^n)$. We also study the related \emph{point separation} problem that asks to separate \emph{all} input points by axis-parallel lines. For this problem we obtain an algorithm with runtime $\fO(1.8906^n)$. Our guarantees follow from structural observations about bichromatic and monochromatic point sets, and hold even if points are allowed to share coordinates. To our knowledge, these are the first improvements over the trivial $2^n$ bound for both problems.

\end{abstract}

\section{Introduction}
\label{sec:Introduction}

We study two natural geometric separation problems for planar point sets. In the first problem, given two disjoint sets of points $R$ and $B$ in the plane, the task is to find a set of at most $k$ axis-parallel lines avoiding the input points, so that every cell induced by these lines is either empty or contains points from only one of the two sets. 

The problem is known as \emph{optimal discretization} and its original motivation comes from pre-processing in machine learning tasks. Given a set of data points with continuous numerical features and class labels, we would like to round the features to discrete values using a set of well-chosen thresholds, such as to guarantee that no two points with different class labels are mapped to the same point. Here the horizontal and vertical lines represent the thresholds for two features and the grid induced by these lines gives a discretized view of the data. (For a {broader} overview of the history of the problem in the context of machine learning, we refer to~\cite[\S\,5]{Froese18}, \cite{KratschMMPS21}, and references therein.)

Optimal discretization has been well studied both in exact and approximate settings. The problem was shown to be APX-hard by C\u{a}linescu, Dumitrescu, Karloff, and Wan~\cite{CalinescuDKW05}; see also~\cite{Megiddo88, Chlebus} for earlier NP-hardness proofs. 
Bonnet, Giannopoulos, and Lampis~\cite{BonnetGL19} showed that the problem is FPT when parameterized by the size of the smaller set $t = \min\{|R|,|B|\}$ and gave an $\fO^*(9^{t})$-time algorithm using a reduction to many instances of \textsc{2-SAT}. They also conjectured that the problem is FPT when parameterized by the solution size $k$, a question also raised by Froese~\cite[\S\,5.5]{Froese18}, as a special case of \emph{co-clustering}. The FPT status of the problem was confirmed by Kratsch, Masa\v{r}\'{i}k, Muzi, Pilipczuk, and Sorge~\cite{KratschMMPS21}. 
Their approach starts with a coarse solution obtained from an approximation, which is gradually transformed into an exact solution, by formulating the refinement as a structured CSP problem; the final running time is of the form $2^{\fO(k^2\log k)}\cdot n^{\fO(1)}$. 

The problem admits an almost trivial algorithm with running time $\fO^*(2^n)$ or $\fO^{*}(n^k)$, as follows. Let $H$ and $V$ be the horizontal and vertical lines of the solution. Guessing either $H$ or $V$, the other set of lines can be placed optimally using a simple greedy strategy. Since there are at most $n-1$ essentially different locations for horizontal and for vertical lines (between coordinates of the input points) the ``guessing'' amounts to trying up to ${n \choose k} \leq \min\{n^k, 2^n\}$ subsets of locations, while the second phase requires only polynomial time. The running time can be improved to $\fO^{*}(n^{k/2})$, if we guess the \emph{smaller} of the two sets $H$ and $V$, of size at most $k/2$. This however, would improve the $2^n$ bound only by a polynomial factor. (Note that if $k$ is unknown, it can be found, e.g., by binary search.) 

For moderately large values of the solution size, e.g., $k \in \Omega(\log{{n}})$, the FPT algorithm mentioned above is slower than the naive exponential algorithms, and as pointed out by Kratsch et al.~\cite{KratschMMPS21}, improving the dependence on $k$ in the FPT result appears to be difficult. This motivates designing algorithms with better dependence on $n$ in the exponential regime.

In this paper we show the following main result. 

\begin{theorem}\label{thm1}
The optimal discretization problem can be solved in time $\fO(1.9602^n)$.
\end{theorem}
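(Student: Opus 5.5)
We plan to refine the trivial ``guess one side, greedily complete the other'' algorithm. Sort the points by $x$-coordinate and by $y$-coordinate, and let $G_x$ and $G_y$ be the sets of essentially different vertical and horizontal line positions. Each set has size at most $n-1$ and consists of the gaps between consecutive distinct coordinates. Given any set $V\subseteq G_x$ of vertical lines, the optimal set of horizontal lines is determined by a left-to-right greedy sweep over the $y$-order, which runs in polynomial time. The idea is to avoid enumerating all subsets of $G_x$. Instead we restrict attention to gaps that are \emph{relevant}, and among those to configurations that could occur in some minimal solution.

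The first step is a structural reduction. A vertical gap between consecutive $x$-coordinates whose adjacent point groups are monochromatic of the same color never needs to be used, since shifting a line out of it can be shown not to hurt. After merging such runs, the relevant vertical gaps correspond to color changes in the $x$-order. If there are few such changes, say at most $(1-\varepsilon)n$, enumerating subsets is already cheaper than $2^n$. Otherwise both orders alternate colors heavily.

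The second step is a local exclusion rule for minimal solutions in the highly alternating regime. Consider two consecutive relevant vertical gaps with a single point (or tiny group) $p$ between them. We would like to show that some pattern of choices, for example leaving both gaps empty while certain neighbouring gaps are chosen, or choosing both, can be excluded or exchanged without increasing cost. The target is to restrict the enumerated subsets of relevant gaps to strings that avoid a fixed forbidden pattern over blocks of constant length $c$. The number of such strings grows like $\lambda^n$ for some $\lambda<2$, computable as the largest root of a linear recurrence. We would then combine the two regimes by a case split on the number $m$ of relevant gaps. The running time is $\sum_m \min\{2^m,\lambda^{n}\}$-type terms, balanced against guessing the horizontal side instead, whichever is cheaper. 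Finally we would optimise the block length and the pattern to reach a constant such as $1.9602$.

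The main obstacle is the exchange argument in the second step. Separation is a global property, since a cell is defined by both families of lines, so it is delicate to show that a vertical line can be moved or removed locally without breaking separation elsewhere. The first attempt would be to use the interplay between the two orders: a point lying between two unused relevant vertical gaps must be separated from its differently colored $x$-neighbours horizontally. This forces horizontal lines, which should allow charging. If that fails, we would fall back on a two-sided argument that splits the points according to whether they are ``locally resolved'' vertically or horizontally. We would then enumerate only the vertical status of one class and the horizontal status of the other, which yields roughly $\binom{n}{\alpha n}$-type savings.
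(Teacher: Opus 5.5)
Your proposal has a genuine gap: the step that would produce the bound is missing. The ``local exclusion rule'' is never specified. You name neither the forbidden pattern nor the exchange argument, you flag it yourself as the main obstacle, and the constant $1.9602$ is asserted (``optimise \dots to reach a constant such as $1.9602$'') rather than derived. Step~1 cannot carry any weight either. In general position both the $x$- and the $y$-order may alternate colours (e.g.\ points $(i,i)$ with alternating colours), so all $n-1$ gaps are relevant. Its justification is also wrong even locally. Take $b_1=(1,2)$, $r_1=(2,5)$, $r_2=(3,3)$, $b_2=(4,4)$. The solution $H=\{3.5\}$, $V=\{2.5\}$ is optimal, and its vertical line sits in the red--red gap between $r_1$ and $r_2$. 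Shifting it left leaves $(r_1,b_2)$ unseparated, shifting it right leaves $(b_1,r_2)$ unseparated, and with $H=\{3.5\}$ fixed every one-line vertical completion must use that gap. So in the hard regime your plan is still the trivial $2^n$ enumeration plus a hope.

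The missing idea is to restrict the \emph{size} of one side rather than the pattern of chosen gaps. Take an optimal $(H',V')$ minimizing $\min\{|H'|,|V'|\}$, say $|H'|\le|V'|$. By optimality every line uniquely separates some colourful pair. By minimality of the smaller side and general position, every line of $H'$ uniquely separates at least two pairs; otherwise replace it by a vertical line separating that pair. Now count points in blocks, i.e.\ maximal runs of nonempty cells inside a vertical strip. Suppose in block $\cB$ there are $|I(\cB)|$ horizontal lines uniquely separating a pair, $|T(\cB)|$ of them at least two pairs. Then $p(\cB)\ge |I(\cB)|+1+|T(\cB)|/2$. Every strip is nonempty, so there are at least $|V'|+1\ge|H'|+1$ blocks, and $\sum_{\cB}(|I(\cB)|+|T(\cB)|)\ge 2|H'|$. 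Together these give $n\ge \tfrac52|H'|+1$. So it suffices to enumerate horizontal sets of size at most $0.4n$ and complete them greedily, which you already have. This costs $n^{\fO(1)}\sum_{i\le 0.4n}\binom{n}{i}\le 2^{H(0.4)n}\le 1.9602^n$, with $H(\cdot)$ the binary entropy; that is where the constant comes from. Your fallback with ``$\binom{n}{\alpha n}$-type savings'' points this way, but without such a lemma nothing determines $\alpha$.

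Finally, the theorem also covers inputs with shared coordinates, and your proposal does not address this case. There no such balance bound holds: if half the points alternate in colour along one horizontal line and half along one vertical line, every solution needs $n/2-1$ lines of each orientation. The paper proves $\min\{|H|,|V|\}\le 0.4n+0.6d$ by splitting $H'$ into pinned and free lines. It compensates for the weaker bound with the fact that only $n-d$ candidate positions exist.
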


A natural variant of the problem is to have a single input point set $P$, asking for a set of at most $k$ axis-parallel lines avoiding the points, so that every cell induced by the lines contains at most one input point. This problem can be seen either as a natural geometric separation task in itself, or as a multiclass version of the above discretization problem where each point has a distinct class label. We refer to this problem simply as \emph{point separation}. 

Similarly to optimal discretization, point separation is known to be NP-hard and APX-hard, and both problems admit a polynomial time $2$-approximation~\cite{CalinescuDKW05}. For point separation an FPT-algorithm easily follows from the observation that to create at least $n$ cells, the solution size must be at least $\sqrt{n}$. Exponential-time enumeration can be applied, similarly to optimal discretization, yielding a running time of ${n \choose k} \le {k^2 \choose k} \in 2^{\fO(k\log k)}$ (omitting polynomial factors). 

Our second result gives an improved exponential running time for point separation.

\begin{theorem}\label{thm2}
The point separation problem can be solved in time $\fO(1.8899^n)$ if the points are in general position, and in time $\fO(1.8906^n)$ otherwise.
\end{theorem}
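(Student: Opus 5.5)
Our plan is to beat the trivial $2^n$ bound by using a structural lower bound on the solution together with the ``guess the smaller side'' enumeration. Sort the $n$ points by $x$-coordinate and by $y$-coordinate. There are at most $n-1$ essentially different vertical lines and $n-1$ horizontal ones. Once the set $V$ of vertical lines is fixed, the optimal set $H$ of horizontal lines can be found greedily in polynomial time: sweep upward and place a horizontal line just below the first point that would otherwise share a cell with an earlier point of the current strip. The symmetric statement holds with the roles of $H$ and $V$ exchanged. So if $|V|=a$, enumerating the candidate sets of that size costs $\binom{n-1}{a}$ times a polynomial, and by symmetry we may assume $a\le k/2$, where $k$ is the optimum.

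The key structural step bounds $k$ from two sides. From below, $(|H|+1)(|V|+1)\ge n$, since every cell holds at most one point. From above, an algorithm only gains if either the smaller side $a$ is far from $n/2$, or the solution is so constrained that few candidate sets survive. Write $a=\alpha n$. The enumeration costs $\binom{n}{\alpha n}\approx 2^{H(\alpha)n}$, which is bad only when $\alpha$ is close to $1/2$. But then $k\approx n$, and a solution with $k$ close to $n$ is forced to be nearly trivial. With $|H|,|V|\approx n/2$, almost every consecutive pair of points in $x$-order must be split by a vertical line, and likewise in $y$-order. I would make this precise by showing that the set of \emph{gaps not used} by $V$ and $H$ together is small. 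Then one enumerates the complement, i.e.\ the gaps left uncut, in $\binom{n}{(1/2-\alpha)n}$ ways.

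To balance the two regimes, I would split on $\alpha$ and take the minimum of two bounds:
\begin{itemize}
\item the smaller-side enumeration, costing $2^{H(\alpha)n}$;
\item an enumeration exploiting that the $n-1-2\alpha n$-ish uncut gaps are few, or better, a branching/DP argument for large $k$.
\end{itemize}
For that second regime I would use a local observation: in any $2\times2$ arrangement, meaning points consecutive in both orders, some cut is forced. This yields a branching rule for unresolved consecutive pairs with a branching vector better than $(1,1)$. The reason is that deciding one gap not cut forces the neighbouring gap in the other order to be cut. Concretely, each point $p$ together with its $x$-successor and $y$-successor gives a constraint, and branching on triples gives a recurrence such as $T(n)\le T(n-1)+T(n-2)$-type improvements, restricted to a constant fraction of the instance. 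The constants $1.8899$ and $1.8906$ would then come from optimizing the threshold $\alpha$ where the entropy bound meets the branching bound.

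For non-general position, points share coordinates, so some pairs cannot be separated by one family of lines and must be separated by the other. This actually forces cuts, but the number of distinct coordinates $n_x,n_y$ may be smaller than $n$. I would redo the computation with $n_x,n_y$ as parameters, noting that points on a common vertical line form chains needing horizontal cuts. The slight loss from $1.8899$ to $1.8906$ is consistent with a worst case where the balance is mildly shifted.

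\textbf{Main obstacle.} The hard step is the structural claim for the large-$k$ regime: quantifying exactly how many gap choices remain feasible when $|V|\approx|H|\approx n/2$, and turning it into a bound strictly below $2^{H(\alpha)n}$ that meets the entropy curve at a good $\alpha$. I would first try to prove that feasible pairs $(H,V)$ correspond injectively to suitable ``certificates'' of size about $(1/2-\alpha)n$. Only if that fails would I fall back to a measure-and-conquer branching on consecutive-point triples.
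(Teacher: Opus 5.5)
There is a genuine gap: nothing in the proposal yields a bound below $2^n$. You correctly identify the bad regime, where the smaller side has size $\alpha n$ with $\alpha$ near $1/2$. But your treatment of it rests on a false heuristic. In general position each vertical line splits exactly one of the $n-1$ pairs that are consecutive in $x$-order. So with $|V|\approx n/2$, about half of those pairs are \emph{not} split vertically; they are split by horizontal lines instead. The uncut gaps therefore number about $2(n-1)-k\approx n$, not few. Enumerating the complement costs as much as enumerating $V$ itself, and no certificate of size $(1/2-\alpha)n$ exists. The fallback branching on pairs consecutive in both orders is unspecified, and nothing guarantees such pairs make up a constant fraction of the instance. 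Finally, $1.8899$ is simply $2^{H(1/3)}$, not the crossing point of an entropy curve with some branching bound you never compute.

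The missing idea is that the bad regime never has to be handled: some optimal solution always satisfies $\min\{|H|,|V|\}\le (n-1)/3$. Take an optimal $(H',V')$ minimizing the smaller side, say $|H'|\le|V'|$.
\begin{itemize}
\item Each $h\in H'$ must uniquely separate at least two pairs. Otherwise, using general position, you could replace $h$ by a vertical line separating its single pair and shrink the smaller side.
\item Call a block a maximal run of nonempty cells within one vertical strip. Since every cell holds at most one point, $h$ uniquely separates at most one pair inside each block, and never a pair spanning two blocks. Hence $h$ is used in at least two blocks.
\item Writing $I(\cB)$ for the set of horizontal lines uniquely separating a pair inside block $\cB$, this gives $\sum_{\cB}|I(\cB)|\ge 2|H'|$.
\item Block $\cB$ contains at least $|I(\cB)|+1$ points, and there are at least $|V'|+1$ blocks. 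Therefore $n\ge |V'|+1+2|H'|\ge 3|H'|+1$.
\end{itemize}
Enumerating horizontal sets of size at most $n/3$ and completing each one greedily then costs $\fO(1.8899^n)$.

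For the degenerate case, ``redoing the computation with $n_x,n_y$'' is not enough; you need three further ingredients:
\begin{itemize}
\item A horizontal line uniquely separating a single pair with equal $x$-coordinates (a pinned line) cannot be swapped for a vertical one.
\item There are at most $d_x$ pinned lines, which gives $\min\{|H|,|V|\}\le n/3+2d/3$.
\item Only $n-d$ candidate positions remain.
\end{itemize}
Maximizing $(1-\delta)H\bigl(\frac{1/3+2\delta/3}{1-\delta}\bigr)$ over $\delta=d/n$ then gives $1.8906$.
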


To our knowledge, these are the first algorithms with runtime below the $2^n$-barrier for both problems. Several natural problems admit algorithms with runtime of the form $2^n$, typically based on subset enumeration, and designing algorithms with runtimes $c^n$ with $c<2$ has been a central theme in the field of exponential algorithms~\cite{FominK10}. Landmark results of this kind include algorithms for \emph{maximum independent set}~\cite{TarjanT77}, \emph{subset sum}~\cite{HorowitzS74}, \emph{Hamiltonian cycle}~\cite{Bjorklund14}, \emph{max cut}~\cite{Williams05}, and \emph{$3$-CNF SAT}~\cite{Schoning99}. (For each of the mentioned problems, $n$ is the standard problem-specific input parameter.)  
Prominent problems that have resisted such improvements include \emph{graph coloring}, \emph{CNF SAT}, \emph{set cover}, and \emph{TSP}.

The search for algorithms that break a ``triviality barrier'' of $2^n$ has often resulted both in problem-specific structural insight, and in general techniques that have seen further applications; these include \emph{branch and reduce}, \emph{measure and conquer}, \emph{split and list}, as well as powerful algebraic techniques; see~\cite{FominK10, Nederlof26} for surveys.

\subparagraph{Further related work.} 
The optimal discretization problem we study in this paper has also been studied before under the names \emph{axis-parallel red-blue separation}~\cite{BonnetGL19}, \emph{minimum linear classification}~\cite{lin_class}, or \emph{supervised discretization}~\cite{YangWebb}, and further variants of both problems have also been considered, e.g., see~\cite{MisraMS20, KujalaE07, Panos, Megiddo88}. 

Both problems can be seen as special cases of \emph{rectangle stabbing}, a problem that has been widely studied in computational geometry; in contrast to optimal discretization and point separation, this more general version is $W[1]$-hard~\cite{dom}.
Another generalization of optimal discretization, where lines of arbitrary slopes are allowed, has also been shown to be $W[1]$-hard~\cite{BonnetGL19}.

A higher dimensional variant of point separation has also been defined, and C\u{a}linescu et al.~\cite{CalinescuDKW05} show that in $d$ dimensions the problem admits a $d$-approximation.

\subparagraph{Overview of our approach.} Our algorithmic idea is very simple, and is based on the naive search algorithm described earlier. We show that, for optimal discretization, assuming the $n$ input points to be in general position with no pair of points sharing $x$- or $y$-coordinates, the optimal solution can be transformed such as to contain at most $0.4n$ lines of one orientation, i.e., $\min\{|H|,|V|\} \leq 0.4n$. For point separation we show the stronger bound $\min\{|H|,|V|\} \leq n/3$. These results allow reducing the search space, yielding the stated running times. We show the bound to be optimal for point separation; for optimal discretization it is not far from optimal in the following sense: there is a two-color point set of arbitrarily large size in which \emph{every} optimal solution must have at least $0.3n$ lines of both orientations. Closing the gap between these upper and lower bounds is an intriguing question we leave open. 

Our running time bounds also apply if we relax the generality requirement, i.e., allowing input points to align horizontally or vertically. In this case the solutions can no longer be assumed to be \emph{skewed} in the above sense; for both problems there are simple examples requiring $|H| = |V| = (n-1)/2$ lines in the solution. Yet, if many lines are ``forced'' by pairs of points that are aligned horizontally or vertically, then the solution can be shown to have a simpler form, again, reducing the search space to yield the bounds of Theorem~\ref{thm1} and Theorem~\ref{thm2}.

\section{Algorithm for optimal discretization}\label{sec2}
The input to the optimal discretization problem consists of two sets $R,B \subset \mathbb{R}^2$ of points in the plane where $|R| + |B| = n$. For a point $p$ we denote its two coordinates by $p.x$ and $p.y$.

The task is to find two sets of lines $H,V \subset \mathbb{R}$, that together separate input points of different classes as follows. For every pair $r \in R$ and $b \in B$ there must be some $v \in V$ such that $r.x < v < b.x$ or $b.x < v < r.x$ holds, or some $h \in H$ such that $r.y < h < b.y$ or $b.y < h < r.y$ holds. We also refer to the horizontal and vertical lines in $H$ and $V$ (given by one coordinate each) as \emph{separators}. In the optimization version of the problem $|H| + |V|$ should be \emph{minimized}, i.e., we would like to separate the points in the sets $R$ and $B$ with as few separators as possible. In the decision version of the problem $k = |V| + |H|$ is given and the question is whether one can separate the points within this budget.  

Note that the values in $V$ and $H$ are disjoint from the $x$-, resp., $y$- coordinates of the input points, i.e., the separating lines avoid the input points. 

\subparagraph{Baseline algorithm.} We now describe the simple search algorithm that also underlies our improved approach.  

\begin{algorithm}[h!]
\caption{Greedy algorithm}
\label{alg:greedy-algorithm}
\KwIn{A set $P=R\cup B$ of points, a set $H$ of horizontal lines.}
\KwOut{A set $V'$ of vertical lines.}

Let $p_1, \dots, p_{n}$ be the points in $P$ sorted non-decreasingly by $x$-coordinate and let $\varepsilon$ be the smallest absolute difference between two different $x$-coordinates. 

\(V'\leftarrow \emptyset\)\;

\For{\(i=2\) \KwTo \(n\)}{
    \For{\(j=i-1\) \KwTo \(1\)}{
    \If{$(p_j.x < p_{i}.x)$ $\land$ $(p_i$ and $p_j$ are from different sets $R,B)$ $\land$ $(p_i$ and $p_j$ are not separated by $H \cup V')$}{
        $V' \leftarrow V' \cup \{p_i.x - \varepsilon/2\}$\;

        \textbf{break}\;
    }
    }
}

\Return $V'$;
\end{algorithm}

Let $H$ and $V$ be the separators in an optimal solution. Suppose $H$ is given. Then an optimal set $V'$ with $|V'| = |V|$ can be found in polynomial time. The approach is described as Algorithm~\ref{alg:greedy-algorithm}, with its correctness shown in Lemma~\ref{lem:greedy}. The case when $V$ is given is clearly symmetric, and an optimal set $H'$ can be found similarly.

\begin{lemma}\label{lem:greedy}
If $H$ is a set of horizontal separators, such that $(H,V)$ is a feasible set of separators for an input $R,B$, then Algorithm~\ref{alg:greedy-algorithm} returns a set $V'$ of vertical separators such that $(H,V')$ is feasible and $|V'| \leq |V|$. (Note that $V$ is unknown to the algorithm.)
\end{lemma}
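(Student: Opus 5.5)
We argue feasibility and optimality separately. Throughout, $p_1,\dots,p_n$ is the sorted order used by Algorithm~\ref{alg:greedy-algorithm}. We write $x_1<x_2<\dots<x_m$ for the distinct $x$-coordinates, so every candidate line $p_i.x-\varepsilon/2$ lies in the gap $(x_{t-1},x_t)$ just left of some distinct coordinate $x_t$.

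\emph{Feasibility.} Suppose $r\in R$ and $b\in B$ are not separated by $H$. Since $(H,V)$ is feasible, $V$ must separate them, so $r.x\neq b.x$. Let $p_i$ be the one of the two with larger $x$-coordinate and $p_j$ the other, so $j<i$. When the outer loop reaches $i$, either the pair is already separated by $H\cup V'$, or the inner loop finds some conflicting pair $(p_{j'},p_i)$, possibly with $j'\neq j$. In the latter case the algorithm adds the line $p_i.x-\varepsilon/2$. This line lies strictly between $p_i.x$ and every smaller coordinate, and in particular between $p_j.x$ and $p_i.x$. So after iteration $i$ every bichromatic pair involving $p_i$ and an earlier point with strictly smaller $x$ is separated. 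Since $V'$ only grows, the final $(H,V')$ is feasible. Because $\varepsilon/2$ is less than half of any gap, the added line also avoids all input points.

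\emph{Optimality via an exchange/charging argument.} Let $v'_1<v'_2<\dots<v'_s$ be the lines of $V'$, with $v'_\ell$ placed just left of coordinate $x_{t_\ell}$. Each $v'_\ell$ was added because of a witness pair $(a_\ell,c_\ell)$ with $a_\ell.x<c_\ell.x=x_{t_\ell}$. This pair is bichromatic, not separated by $H$, and not separated by $v'_1,\dots,v'_{\ell-1}$. The last condition says $a_\ell.x>v'_{\ell-1}$. Hence the open interval $I_\ell=(a_\ell.x,\,c_\ell.x)$ lies within $(v'_{\ell-1},x_{t_\ell}]$. Feasibility of $(H,V)$ forces $V$ to contain a line in $I_\ell$. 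For distinct $\ell$, the intervals $I_\ell\subseteq(v'_{\ell-1},v'_\ell+\varepsilon/2)$ are pairwise disjoint, because $v'_\ell$ is the last cut before $x_{t_\ell}$ and $I_{\ell+1}$ starts after $v'_\ell$. So $V$ contains $s$ distinct lines, i.e., $|V'|=s\le|V|$.

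One point needs care. Within a fixed outer iteration $i$, the points with the same $x$-coordinate as $p_i$ are skipped by the condition $p_j.x<p_i.x$. Several $p_i$ can share the coordinate $x_t$, but the algorithm adds the same line each time, so $V'$ as a set gets at most one line per distinct coordinate. Also, the witness condition must be evaluated against $V'$ at the moment of insertion, and all earlier lines are to the left. I expect the main obstacle to be this disjointness step: showing $a_\ell.x\ge v'_{\ell-1}$ using ``not separated by $V'$ at that time'', together with the correct handling of repeated $x$-coordinates. The rest is routine.
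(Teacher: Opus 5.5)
Your proof is correct and uses the same idea as the paper: once $H$ is fixed, $V$ must stab the open intervals spanned by bichromatic pairs not separated by $H$, and Algorithm~\ref{alg:greedy-algorithm} is the right-endpoint greedy for this stabbing problem. The difference is in how much is actually proved. The paper asserts that the algorithm implements the textbook greedy and cites its optimality. You instead verify feasibility by a direct loop argument, and you prove optimality with an explicit certificate: $|V'|$ pairwise disjoint witness intervals $I_\ell$, each of which $V$ must stab. Your route is longer but self-contained. It also checks details the paper glosses over: open intervals, shared $x$-coordinates, and why the cut $p_i.x-\varepsilon/2$ lies to the right of every input coordinate inside the intervals it stabs.

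One step should be tightened. The containing intervals $(v'_{\ell-1},v'_\ell+\varepsilon/2)$ are not pairwise disjoint, since consecutive ones overlap in $(v'_\ell,v'_\ell+\varepsilon/2)$. What makes the $I_\ell$ themselves disjoint is this: the left endpoint $a_{\ell+1}.x$ is an input coordinate exceeding $v'_\ell=x_{t_\ell}-\varepsilon/2$. By the choice of $\varepsilon$, this gives $a_{\ell+1}.x\ge x_{t_\ell}=\sup I_\ell$. The same argument, using $a_{\ell'}.x>v'_{\ell'-1}\ge v'_\ell$, handles non-consecutive indices $\ell<\ell'$.

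Two minor harmless points. Since $I_\ell$ is open, it lies in $(v'_{\ell-1},x_{t_\ell})$ rather than a half-closed interval. Also, a second insertion of the same line never actually happens: once $x_t-\varepsilon/2$ is in $V'$, later points with coordinate $x_t$ are already separated from every point to their left.
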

\begin{proof}
We call a pair $(r,b)$ of points with $r \in R$ and $b\in B$ a \emph{colorful pair}. 
Every colorful pair $(r,b)$ must be separated by a line in $H$ or a line in $V$. Consider any solution $(H,V)$. Then, for all colorful pairs not separated by $H$, there must be a separator $v \in V$ with $r.x < v < b.x$ or $b.x < v < r.x$. This means that $V$ is a hitting set for the collection of open intervals defined by the $x$-coordinates of colorful pairs not already separated by $H$. 

Interval hitting set is well-known to be solvable in polynomial time by a greedy algorithm: visiting the intervals in the order of their right endpoints, we hit each (not yet hit) interval at its rightmost point. (Or, in case of open intervals, by a point in the interval, to the right of any interval endpoints inside it.)

\cref{alg:greedy-algorithm} implements exactly this greedy algorithm, therefore its output $V'$ is of optimal size and feasible together with the set of horizontal separators $H$.
Viewing the task as interval hitting set also suggests a more efficient implementation ($\fO(n \log{n})$ instead of $\fO(n^2)$ time), by first sorting and collecting all essential intervals (not containing other intervals) by careful data structuring. As polynomial factors are not our concern here, we omit this optimization.  
\end{proof}

The overall baseline algorithm is now straightforward.
We guess whether $H$ or $V$ is smaller in an optimal solution, breaking ties arbitrarily. Observe that $\min\{|H|,|V|\} \leq n/2$, since a solution of size $n-1$ can always be found.

Suppose, without loss of generality that $|H| \leq |V|$. Then, setting $\upalpha = 1/2$, we iterate over all essentially different placements $H$ of up to $\upalpha n$ horizontal lines, and call Algorithm~\ref{alg:greedy-algorithm} with each choice of $H$.  If $H$ is part of an optimal solution, then Algorithm~\ref{alg:greedy-algorithm} finds a matching $V'$ so that $H$ and $V'$ form an optimal solution. Note that some choices of $H$ may not work when points share coordinates, since no feasible complementary set $V$ may exist. The algorithm rejects such choices. It still succeeds because the horizontal part of an optimal solution is among the choices considered. Finally, we return a feasible pair $(H, V')$ of smallest total size. 

The algorithm is given as Algorithm~\ref{alg:baseline-algorithm}, with correctness following from the previous discussion. Note that, so far, we have not placed any restriction on whether the points can share $x$- or $y$- coordinates. 

\begin{algorithm}[h!]
\caption{Baseline algorithm}
\label{alg:baseline-algorithm}
\KwIn{A set $P=R\cup B$ of points.}
\KwOut{An optimal set $H,V$ of horizontal and vertical lines that separate $P$.}

Assume $|H| \leq |V|$ (other case symmetric). 

Let $y_1, \dots, y_{n'}$ be the distinct $y$-coordinates of points in $P$ in non-decreasing order and let $\varepsilon$ be the smallest absolute difference between two different $y$-coordinates. 

$\upalpha \leftarrow 1/2$\;

\For{\(i=0\) \KwTo \(\lceil \upalpha n \rceil \)}{
    \For{all subsets $H$ of size $i$ of $\{y_1 + \varepsilon/2, \dots, y_{n'-1} + \varepsilon/2\}$}{ 
        \textbf{call} Algorithm~\ref{alg:greedy-algorithm} to find a set $V_H$ forming a solution together with $H$\;
    }    }
\Return solution $(H, V_H)$ that minimizes $|H| + |V_H|$;
\end{algorithm}

The runtime is dominated by the loop over all sets $H$ of size at most $\upalpha n$. Here, the possible values for horizontal separators fall between neighboring values in the sorted order of distinct $y$-coordinates, their total number is thus at most $n-1$.

The total running time is thus of the form $n^{\fO(1)} \cdot \sum_{i=0}^{\upalpha n} {n-1 \choose i} \subseteq \fO^{*}(2^n)$.

In the following subsections we describe our improved algorithm; it is essentially the same as the above baseline, with a structural observation that allows reducing the search space. In \S\,\ref{sec2a} we give the result in the simpler case when the points are in general position. In \S\,\ref{sec2b} we adapt the algorithm to the case where alignment of points is allowed. In \S\,\ref{sec3} we adapt the approach to the point separation problem.

\subsection{Algorithm with generality assumption}
\label{sec2a}

Our improved algorithm is essentially the same as the one given earlier in the section, with the observation that the value $\alpha = 0.4$ is sufficient, and thus, the search space can be reduced. Recall that first we treat the general position case, i.e., we assume that no points in $R \cup B$ share an $x$- or $y$-coordinate.

\begin{lemma}[Balance Lemma]\label{upper-bound}
    For every optimal discretization instance $R,B$ of size $n$ in general position, there exists an optimal solution $(H,V)$ with $\min\{|H|,|V|\}\le 0.4n$.
\end{lemma}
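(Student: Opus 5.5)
We argue by a counting/exchange argument on an optimal solution $(H,V)$, assumed \emph{minimal}: no separator can be removed, and every separator lies in a gap between consecutive coordinates. Sort the points by $x$-coordinate. In general position there are exactly $n-1$ vertical gaps, and every $v\in V$ occupies a distinct gap; the same holds for $H$ and the horizontal gaps. Let $|H|=a$ and $|V|=b$, and suppose for contradiction that every optimal solution has $\min\{a,b\}>0.4n$. Then $a+b>0.8n$.

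The first key step uses optimality against a cheap alternative. Consider the trivial solution that puts a vertical line only between $x$-consecutive points of different colors. If the $x$-sorted color sequence has $c_x$ color changes, then $a+b\le c_x$, and symmetrically $a+b\le c_y$. So both color sequences must alternate a lot: each has more than $0.8n$ color changes out of $n-1$ gaps.

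The second key step looks at vertical gaps \emph{not} used by $V$. There are $n-1-b<0.6n$ of them. Group consecutive points in $x$-order into \emph{vertical slabs} delimited by $V$; there are $b+1$ slabs. A slab containing two points of different colors needs every such colorful pair inside it to be separated by $H$. I plan a local exchange: take a slab containing exactly two points $p,q$ of different colors, separated by some $h\in H$. Try to replace $h$ by a vertical line between $p$ and $q$, or more generally trade a set of horizontal lines for vertical lines. Removing $h$ can only break colorful pairs that $h$ alone separates, and those pairs must lie within a single slab. Hence if $h$ is ``needed'' only in one slab, the swap keeps $|H|+|V|$ fixed while moving one line from $H$ to $V$. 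Repeated swaps push the solution toward skew unless many horizontal lines are each needed in at least two slabs, and symmetrically for vertical lines.

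The final step is a charging argument. Each needed horizontal line must witness colorful pairs in at least two vertical slabs, and every vertical line in at least two horizontal slabs. Counting witness pairs against the $n$ points, each point taking part in a bounded number of witnessing roles, gives an inequality such as $2a+2b\le n+\min\{a,b\}$ or similar. With $a,b>0.4n$ this would contradict $a+b\le c_x,c_y$ and the bounded gap counts, forcing $\min\{a,b\}\le 0.4n$. A potential function such as $|H|$ (when decreasing $H$) guarantees that the swaps terminate.

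I expect the main obstacle to be this last charging step. One has to choose precisely the minimal witness structure (the ``private'' colorful pairs for each line) so that each point is charged at most a constant number of times and the constants come out to exactly $0.4$. I would first try assigning to each line two private colorful pairs that share points only in a controlled way, for example consecutive points in the orthogonal order. I would then derive $5\min\{a,b\}\le 2n$ by a careful case analysis of the slabs of sizes $1$, $2$ and $\ge3$.
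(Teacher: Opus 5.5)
Your proposal has a genuine gap. The step that actually yields the constant $0.4$ is never carried out, and the structural facts you plan to feed into it are not correct for two colors.

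Start with the exchange. A horizontal line $h$ that is ``needed'' in only one vertical slab can still uniquely separate several colorful pairs there. For instance, the cell just below $h$ may contain two red points and the cell just above one blue point whose $x$-coordinate lies between theirs; then no single vertical line can replace $h$. The exchange that works is at the level of pairs: if $h$ uniquely separates exactly one colorful pair, replace it by a vertical line in that pair's $x$-gap, which exists by general position. Apply this to an optimal solution minimizing the smaller side, with $|H|\le|V|$. It shows that every $h\in H$ uniquely separates at least two colorful pairs. You cannot also impose the symmetric property on $V$, since swaps in that direction enlarge $H$, and you do not need it. More importantly, those two pairs may lie in the \emph{same} slab. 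So your claim that each needed horizontal line witnesses pairs in at least two slabs is false here; it holds only for monochromatic point separation, where every cell holds at most one point.

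The charging step is then left as ``an inequality such as $2a+2b\le n+\min\{a,b\}$ or similar''. That particular inequality is false: points $(i,i)$ with alternating colors force $a+b=n-1$ for every split. It would also give $\min\{a,b\}\le n/3$, which the paper leaves open. The color-change bound $a+b\le c_x$ plays no role. What is missing is a per-slab point count:
\begin{itemize}
\item Every one of the $|V|+1$ slabs is nonempty, and a pair uniquely separated by $h$ lies in the two cells adjacent to $h$ inside one slab.
\item Group consecutive nonempty cells of a slab into maximal runs (blocks). Let $I$ be the set of lines uniquely separating some pair inside a block, and $T\subseteq I$ those uniquely separating at least two pairs inside it.
\item Such a block has at least $|I|+1$ nonempty cells. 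It also has at least $|T|/2$ further points, because a line in $T$ needs an adjacent cell containing two points, and each cell is adjacent to at most two lines.
\end{itemize}
Summing over blocks, and using $\sum|I|+\sum|T|\ge 2|H|$, $\sum|I|\ge|H|$ and $|V|\ge|H|$, gives $n\ge |V|+1+\tfrac32|H|\ge\tfrac52|H|+1$.
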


In the following, we use the standard estimate on the binomial sum $\sum_{i=0}^{\lfloor \upbeta n \rfloor}{{n \choose i} \leq 2^{n H(\upbeta)}}$, for $0 < \upbeta \leq 1/2$, where $H(x) = -x \log_2{x} - (1-x) \log_2{(1-x)}$ is the binary entropy function (e.g., see~\cite[\S\,3.2]{FominK10}).

As in the baseline Algorithm~\ref{alg:baseline-algorithm}, we guess a set $H$ of horizontal lines of size at most $\upalpha n$ (assuming $|H| \leq |V|$, and setting $\upalpha = 0.4$ by Lemma~\ref{upper-bound}). For each $H$, we invoke \cref{alg:greedy-algorithm} to obtain a minimum feasible solution $(H,V_H)$ and we return the optimal among all solutions. The total running time is now of the form $n^{\fO(1)} \cdot \sum_{i=0}^{0.4 n} {n \choose i} \subseteq \fO(1.9602^n)$, where we used the earlier estimate for the binomial sums with $H(0.4) \leq 0.970952$.

We now proceed to prove the Balance Lemma.

\begin{proof}[Proof of Lemma~\ref{upper-bound}]
Among all optimal solutions for an input $P = R \cup B$, let $(H^\prime, V^\prime)$ be the one whose smaller side is minimal and assume, without loss of generality, that $|H'| \leq |V'|$. 

For any feasible solution $(H,V)$, we say that a separator $x \in V$ or $x \in H$ \emph{uniquely separates} a colorful pair if, after removal of $x$ from the solution, the pair is no longer separated.  

We start with the following observation.

\begin{observation}\label{obs}
Every line in $V^\prime\cup H^\prime$ uniquely separates at least one colorful pair. Every line in $H'$ uniquely separates at least two colorful pairs.
\end{observation}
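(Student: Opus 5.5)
The first claim follows directly from optimality, and the second from the tie-breaking rule used to choose $(H',V')$. In both cases we assume the claim fails and build a solution that contradicts the choice of $(H',V')$.

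\emph{First claim.} Suppose some separator $\ell \in V' \cup H'$ uniquely separates no colorful pair. Then every colorful pair separated by $\ell$ is also separated by some other line of the solution. So $(H',V')$ with $\ell$ removed is still feasible, and it has one fewer line. This contradicts the optimality of $(H',V')$. Hence every line uniquely separates at least one colorful pair.

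\emph{Second claim.} By the first claim, it suffices to rule out a line $h \in H'$ that uniquely separates exactly one colorful pair $(r,b)$. Assume such an $h$ exists.

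\begin{itemize}
\item Since the input is in general position, $r.x \neq b.x$. So we can choose a vertical line $v$ strictly between $r.x$ and $b.x$ whose value differs from every input $x$-coordinate, for instance halfway between two consecutive distinct $x$-coordinates within that range.
\item The line $v$ is not already in $V'$. Otherwise $(r,b)$ would be separated by $v \in V'$, and $h$ would not separate it uniquely.
\item Consider $(H' \setminus \{h\},\, V' \cup \{v\})$. Every colorful pair other than $(r,b)$ that was separated by $h$ is, by uniqueness, separated by another line that is still present. The pair $(r,b)$ is separated by $v$. All other pairs are unaffected. So this is a feasible solution of the same total size, and hence optimal.
\item Its smaller side has size $\min\{|H'|-1,\, |V'|+1\} = |H'|-1$, because $|H'| \le |V'|$.
\end{itemize}

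This is strictly smaller than $|H'|$, contradicting the choice of $(H',V')$ as the optimal solution whose smaller side is minimal. Therefore every $h \in H'$ uniquely separates at least two colorful pairs.

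The only delicate point is ensuring the swap is legal. The new vertical line must exist and avoid the input points, which is exactly where general position is used. It must also be genuinely new, so that the count $|V'|+1$ is correct, which follows from uniqueness. With both secured, the argument is a direct exchange.
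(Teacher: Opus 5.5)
Your proof is correct and follows the paper's approach. The first claim uses deletion. The second uses the exchange of a horizontal line that uniquely separates a single pair for a vertical line separating that pair, which is possible by general position and contradicts the minimality of the smaller side. Your added checks, that $v\notin V'$ and that the smaller side becomes exactly $|H'|-1$ because $|H'|\le|V'|$, make explicit what the paper leaves implicit.
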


\begin{proof}
For the first claim, if a line does not uniquely separate any colorful pair, then it can be removed from the solution, contradicting the optimality of $(V',H')$.

For the second claim, if a line in $H'$ uniquely separates only one colorful pair, then it can be replaced by a vertical line that separates the same pair, maintaining the total cardinality of the solution, contradicting that $(H',V')$ minimizes the smaller of the two sets. (In this step it is crucial that the points are in general position, so each colorful pair can be separated by both horizontal and vertical lines.) Note that the two uniquely separated colorful pairs may overlap in one point. 
\end{proof}

The vertical lines in $V^\prime$ divide the plane into vertical \emph{strips}, defined as the region between two neighboring vertical lines, with two strips of unbounded width to the left and to the right. Each strip is split by the horizontal lines in $H^\prime$ into \emph{cells} (the part of the strip between two neighboring horizontal lines, as well as a topmost and a bottommost cell of unbounded height). We call a cell \emph{empty} if it contains no point of $R \cup B$ and \emph{nonempty} otherwise.

We call a maximal consecutive sequence of nonempty cells in the same vertical strip \emph{a block}. For a horizontal line $h_i \in H^\prime$ and a block $\cB$, let $\upalpha_{i,\cB}$ be the number of
colorful pairs in $\cB$ that are uniquely separated by $h_i$. Define $I(\cB)=\{i : \alpha_{i,\cB}\ge 1\}$, 
$T(\cB)=\{i:\alpha_{i,\cB}\ge 2\}$ where $i \in \{1, \dots, |H'|\}$ are the indices of the horizontal lines in $H^\prime$. In words, $I(\cB)$ and $T(\cB)$ count the number of horizontal lines uniquely separating at least one, resp., at least two pairs in $\cB$. 

Observe that a horizontal line cannot uniquely separate a pair across multiple blocks (as those are also separated by vertical lines). According to \cref{obs}, for every $i$, we have 
\begin{center}
\(\displaystyle
|\{\cB : i\in I(\cB)\}| + |\{\cB : i\in T(\cB)\}| ~\ge~ 2.
\)
\end{center}

As there are $|H^\prime|$ horizontal lines, we get
\begin{center}
\(\displaystyle
\sum_\cB |I(\cB)| + \sum_\cB |T(\cB)| ~\ge~ 2|H^\prime|.
\)
\end{center}

In every block $\cB$, each horizontal line indexed by $I(\cB)$ requires both sides to be nonempty, so there are at least $|I(\cB)| + 1$ nonempty cells. These cells contribute at least $|I(\cB)| + 1$ points. 
Each horizontal line indexed by $T(\cB)$ requires an additional point in one of its two adjacent cells. Since these extra points could be shared by two horizontal lines in $T(\cB)$, the total contribution is at least $|T(\cB)|/2$ additional points. Let $p(\cB)$ be the number of points in $\cB$. Then, we have
\begin{center}
\(\displaystyle
p(\cB) ~\ge~ |I(\cB)|+1+\frac{|T(\cB)|}{2}.
\)
\end{center}

Denoting the number of blocks as $n_\cB$, we have
\begin{center}
\(\displaystyle
\begin{aligned}
n 
& ~ \ge ~ n_\cB + \sum_\cB |I(\cB)| + \sum_\cB \frac{|T(\cB)|}{2} \\
& ~ \ge ~ n_\cB + \sum_\cB |I(\cB)| +  \frac{2|H^\prime|-\sum_\cB |I(\cB)|}{2}.
\end{aligned}
\)
\end{center}

According to \cref{obs}, we have $\sum_\cB |I(\cB)| \ge |H^\prime|$. As there are $|V^\prime| + 1$ strips, we have $n_\cB \ge |V^\prime|+1\ge |H^\prime|+1$, and therefore
\begin{center}
\(\displaystyle
\begin{aligned}
n
& ~ \ge ~ |V^\prime| + 1  + \frac{|H^\prime|}{2} + |H^\prime| \\
& ~ \ge ~ \frac{5|H^\prime|}{2}+1.
\end{aligned}
\)
\end{center}
This proves \cref{upper-bound}.
\end{proof}

An improvement to the running time of our algorithm would follow simply from strengthening the value $\upalpha=0.4$ in Lemma~\ref{upper-bound}. While we cannot rule out such an improvement, we show that the value is already close to optimal. 

\begin{lemma}\label{lower-bound}
    There exists an infinite family of instances with $n$ points in general position such that every optimal solution $(H,V)$ satisfies $\min\{|H|,|V|\}\ge 0.3n$.
\end{lemma}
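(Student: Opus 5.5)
The plan is to build the family from many disjoint copies of one constant-size gadget placed along a diagonal. Optimality then decomposes across copies, so the balance property of a single gadget transfers to the whole instance. The target ratio $0.3$ suggests a gadget $G$ of $10$ points in general position with the following property:
\begin{quote}
the optimum for $G$ is $6$, and every feasible solution of $G$ with at most $2$ lines of one orientation has size at least $7$.
\end{quote}
Consequently, every optimal solution of $G$ has exactly $3$ horizontal and $3$ vertical lines. I would search for $G$ among small ``staircase/checkerboard'' configurations: interleaved red and blue points, arranged so that a purely vertical separation needs about $\tfrac{n}{2}$ cuts while a balanced grid of $3+3$ lines yields $16$ cells that suffice. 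Verifying the property for a fixed candidate is a finite check, done by hand or by computer, over all $\binom{9}{i}$ placements of $i\le 2$ horizontal lines (and symmetrically for vertical lines), using Lemma~\ref{lem:greedy} to complete each placement optimally.

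Next, take $m$ copies $G_1,\dots,G_m$. Translate them so that both the $x$-range and the $y$-range of $G_j$ lie strictly to the right of, and above, those of $G_{j-1}$, perturbing slightly to keep general position. This gives $n=10m$ points, possibly plus $O(1)$ extra points if needed for the interaction step below. For the lower bound, look at any feasible solution $(H,V)$. A vertical line can lie inside the open $x$-range of at most one copy, and likewise a horizontal line can lie inside the $y$-range of at most one copy. Lines outside the ranges of $G_j$ separate no pair of $G_j$. Hence the lines inside the ranges of $G_j$ form a feasible solution for $G_j$, which gives $|H|+|V|\ge 6m$.

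For the matching upper bound, I would exhibit a solution of size exactly $6m$. Then every optimal solution uses exactly $6$ lines within each copy, and these $6$ lines are optimal for that copy. By the gadget property they split as $3+3$, so $\min\{|H|,|V|\}\ge 3m=0.3n$.

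The main obstacle is this upper bound, because colorful pairs between different copies must also be separated without extra lines. I would handle this while choosing the gadget, by also requiring a cross-compatibility property. One option is that in some optimal solution of $G$, the points to the right of the last vertical line and above the last horizontal line all have one color $c$, while the points to the left of the first vertical line and below the first horizontal line all have color $c$ as well. Then for consecutive copies, the only cells shared across copies contain points of a single color. More distant copies are already separated, because any copy between them contributes vertical lines in between. If no such gadget exists, the fallback is to insert a single separating line between consecutive copies. This costs $m-1$ extra lines, which spoils the exact counting, so I would instead enlarge the gadget slightly and recompute the ratio. The primary route, however, is finding a $10$-point gadget with the compatibility property.
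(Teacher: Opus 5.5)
Your framework is sound and is the same as the paper's: diagonal copies of a $10$-point gadget, lines confined to one copy's range, and a $6m$-line global solution forcing exactly $6$ optimal lines per copy. Your top-right/bottom-left cell condition for cross-copy pairs is in fact more explicit than the paper's remark that the argument ``applies to each copy''. \textbf{The gap is that you never produce the gadget, and its existence is the whole content of the lemma}; everything else is routine bookkeeping.

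``Search among staircase/checkerboard configurations by hand or computer'' is a plan, not a proof. You also leave open that no suitable gadget exists. Your fallback, enlarging the gadget or adding $O(1)$ extra points, would not give $\min\{|H|,|V|\}\ge 0.3n$, since with $n=10m+c$ you only get $3m<0.3n$. Your heuristic that a purely vertical solution needs about $n/2$ lines while a $3+3$ grid suffices also does not exclude unbalanced optima such as $2+4$. So it does not tell you what to check.

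The missing idea is a rigidity mechanism that makes both gadget properties provable by hand. The paper takes
\[
R=\{(1,3),(3,1),(4,7),(6,5),(8,10),(10,8)\},\qquad B=\{(2,2),(5,4),(7,6),(9,9)\}.
\]

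\emph{At least six lines.} The zigzag triplets $(1,3),(2,2),(3,1)$; $(5,4),(6,5),(7,6)$; $(8,10),(9,9),(10,8)$ give six consecutive colorful pairs. Their separating $x$-intervals are pairwise disjoint, and so are their $y$-intervals. Hence at least $6$ lines are needed, and in a $6$-line solution each line serves exactly one of these pairs. The solution $V=H=\{1.5,5.5,8.5\}$ attains $6$.

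\emph{Balance.} Suppose a triplet used two lines of the same orientation. Then some cross pair would have no line in its $x$- or $y$-interval:
\begin{itemize}
\item first triplet: $(3,1)$--$(5,4)$ or $(1,3)$--$(5,4)$;
\item second triplet: $(4,7)$--$(5,4)$ or $(4,7)$--$(7,6)$;
\item third triplet: $(7,6)$--$(8,10)$ or $(7,6)$--$(10,8)$.
\end{itemize}
So every optimum is $3+3$.

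\emph{Compatibility.} This gadget satisfies your condition. For $V=H=\{1.5,5.5,8.5\}$ the top-right cell contains only the blue point $(9,9)$, and the bottom-left cell is empty. Hence copies shifted by $10$ along the diagonal need no extra lines, and $n=10m$ exactly.
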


\begin{figure}[htbp]
    \centering 
    \includegraphics[width=0.7\textwidth]{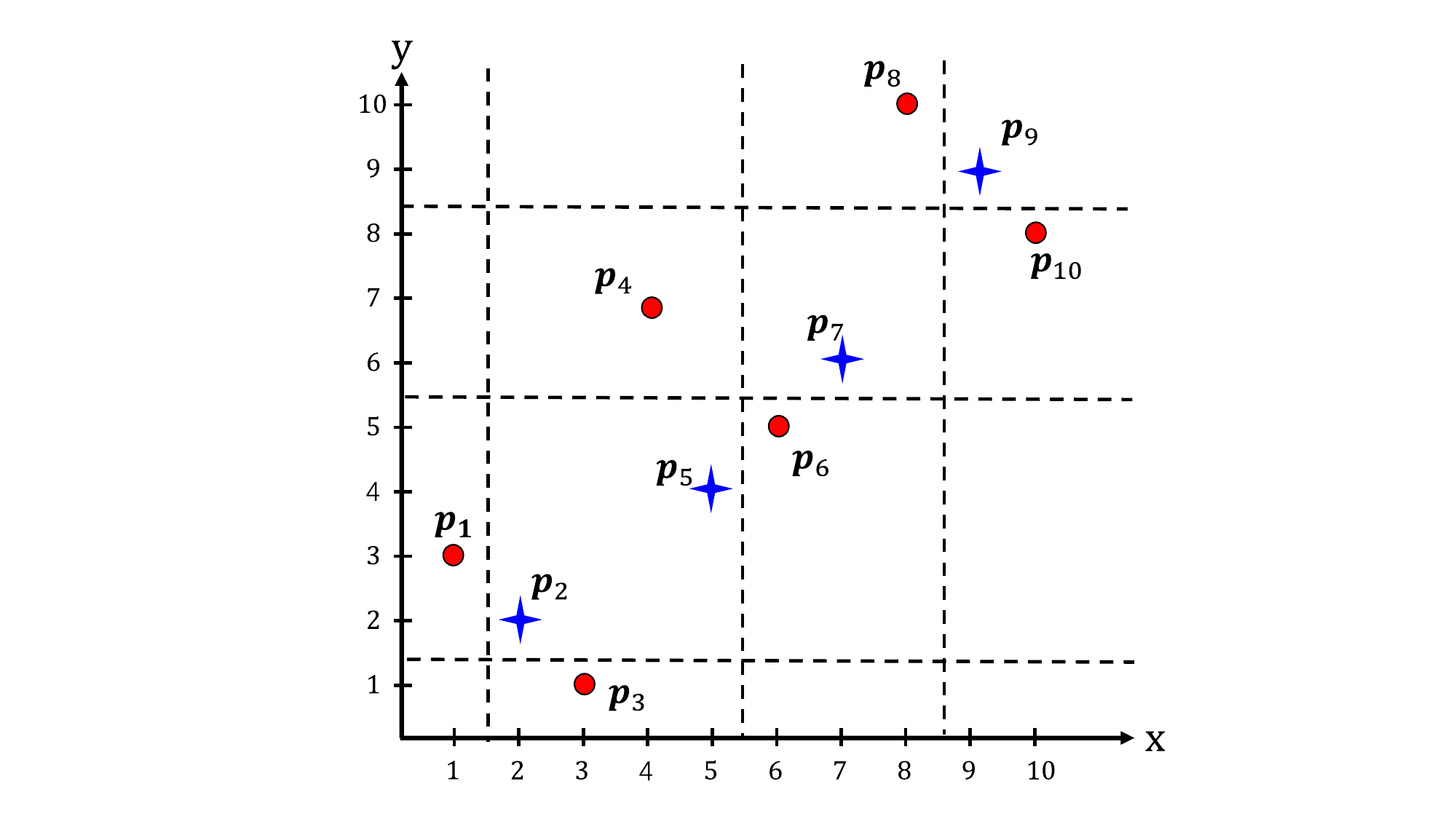} 
    \caption{Base instance with $10$ points; points in $R$ shown as dots, points in $B$ as stars. Dashed lines represent a feasible solution.} 
    \label{base-instance} 
\end{figure}

\begin{proof}
We use a base instance shown in \cref{base-instance}, consisting of $10$ points, with $R = \{p_1 = (1,3), p_3 = (3,1), p_4 = (4,7), p_6 = (6,5), p_8 = (8,10), p_{10} = (10,8)\}$, and $B = \{p_2 = (2,2), p_5 = (5,4), p_7 = (7,6), p_{9} = (9,9)\}$. 
The instance has a feasible solution with $V = \{1.5, 5.5, 8.5\}$ and $H = \{1.5, 5.5, 8.5\}$. 

Consider the three triplets: $(p_1,p_2,p_3), (p_5,p_6,p_7), (p_8,p_9,p_{10})$. In the first triplet, separating $p_1$ from $p_2$ requires one line and separating $p_2$ from $p_3$ requires one line. Similarly, the second and third triplet require $2$ lines, and the lines for different triplets are necessarily disjoint; we thus require at least $6$ lines. 

Next, we show that the optimal solution must balance the number of vertical and horizontal lines. In the first triplet, if we choose two vertical lines, then $p_3$ and $p_5$ are not separated. If we choose two horizontal lines, then $p_1$ and $p_5$ are not separated. Similarly for the third triplet, choosing only vertical or only horizontal lines will leave the pairs $p_7$ and $p_8$, resp., $p_7$ and $p_{10}$ not separated. For the second triplet, using two vertical or two horizontal lines would leave $p_4$ unseparated from $p_5$ or $p_7$.

Therefore each triplet requires exactly one vertical line and one horizontal line in every optimal solution. Hence the feasible solution $(H,V)$ is optimal and every optimal solution $(H,V)$ satisfies $|H| = |V| = 3$. 

For any integer $t \ge 1$, we can construct an instance with $n = 10t$ points by copying the base instance $t$ times and placing the copies along the diagonal, shifting both coordinates of the $i$-th copy by $10(i-1)$. The previous argument applies to each copy of the base instance, implying that an optimal solution must contain $3t$ horizontal and $3t$ vertical lines, so 
$\min\{|H|,|V|\} = 3t = 0.3n$. 
\end{proof}

\subsection{Removing the general position assumption}
\label{sec2b}

Lemma~\ref{upper-bound} crucially relies on the input point set being in general position. In this subsection we extend the argument to the case when the generality assumption is relaxed, i.e., input points can be on the same horizontal or vertical line. 

Observe that the $0.4n$ upper bound no longer holds in this case. Take, for instance, the input $P = R \cup B$ where $R$ consists of the points $(2i,1)$ and $(0.5, 2i)$, and $B$ consists of the points $(2i+1,1)$ and $(0.5,2i+1)$, for $i \in \{1, \dots, n/4\}$. In words, we split the points in two equal parts, one half on a horizontal, the other on a vertical line, both alternating between the classes $R$ and $B$. While $|P| = n$, the only feasible way to separate $R$ and $B$ is via $n/2-1$ horizontal and $n/2-1$ vertical lines. 

While the Balance Lemma does not hold, a solution can still be found easily in this example, due to the constraints imposed by the overlaps in coordinates. 
In the following we aim to capture this intuition. 

Let $n_x$ and $n_y$ denote the number of distinct $x$- and $y$-coordinates of points in $P$, and let $d_x = n - n_x$, resp., $d_y = n-n_y$ denote the vertical and horizontal \emph{degeneracy} of the input. Finally, denote $d = \min\{d_x,d_y\}$. 

We show the following extension of the Balance Lemma.

\begin{lemma}[General Balance Lemma]\label{gbl}
    For every optimal discretization instance $R,B$ of size $n$, there exists an optimal solution $(H,V)$ with $\min\{|H|,|V|\}\le 0.4n + 0.6d$, where $d = \min\{d_x,d_y\}$ is the degeneracy of the input. 
\end{lemma}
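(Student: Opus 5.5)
The plan is to rerun the counting from the proof of Lemma~\ref{upper-bound}, but to fix the orientation by the degeneracy rather than by which side of the solution is smaller. The only place the old argument used general position is the second claim of Observation~\ref{obs}. I will show that its failures can be charged injectively to the degeneracy. The argument as planned actually gives the stronger bound $0.4n+0.2d$.

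\textbf{Setup.} By symmetry (swap the roles of $x$ and $y$), assume $d = d_x \le d_y$. Among all optimal solutions, pick $(H',V')$ minimizing $|H'|$; unlike before, we do not assume $|H'|\le|V'|$. The first claim of Observation~\ref{obs} (every separator uniquely separates some colorful pair) still holds by optimality.

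\textbf{Forced lines.} Call $h\in H'$ \emph{forced} if it uniquely separates exactly one colorful pair $(p,q)$. If $p.x\ne q.x$, replace $h$ by a vertical line between $p.x$ and $q.x$. This keeps the solution feasible and optimal, and it decreases $|H'|$, a contradiction. Hence every forced $h$ uniquely separates a pair with $p.x=q.x$.

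Now walk up the column $x=p.x$ from $p$ to $q$. Since $h$ is the only separator between $p$ and $q$, there is a consecutive pair of points in that column that is straddled by $h$ and by no other line of $H'$. Map $h$ to this consecutive pair. Two different forced lines cannot map to the same consecutive pair, because then that pair would be straddled by two lines. A column with $m$ points has $m-1$ consecutive pairs, so the number $f$ of forced lines satisfies $f\le \sum_{\text{columns}}(m-1)=d_x=d$.

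\textbf{Counting.} Every non-forced line of $H'$ uniquely separates at least two colorful pairs. So, with blocks $\cB$ and the sets $I(\cB),T(\cB)$ defined exactly as before, we get
\[
\sum_\cB |I(\cB)| + \sum_\cB |T(\cB)| \ge 2|H'|-f .
\]
The per-block bound $p(\cB)\ge |I(\cB)|+1+|T(\cB)|/2$ uses only that cells adjacent to a uniquely separating line are nonempty and that extra points are shared by at most two lines. It does not use general position, so it still holds; the same goes for $\sum_\cB|I(\cB)|\ge|H'|$ and $n_\cB\ge |V'|+1$. Repeating the chain of inequalities gives
\[
n \ge |V'|+1+\tfrac{3}{2}|H'|-\tfrac{f}{2}.
\]
Writing $m=\min\{|H'|,|V'|\}$ and bounding both $|H'|$ and $|V'|$ below by $m$ yields $n\ge \tfrac52 m+1-\tfrac d2$. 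Hence $m\le 0.4n+0.2d\le 0.4n+0.6d$.

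\textbf{Main obstacle.} The step I expect to be delicate is the injectivity of the map from forced lines to consecutive aligned pairs. One has to check that coincident $y$-coordinates in a column do not break the argument; for instance, points sharing both coordinates, if allowed, also count toward $d_x$. One also has to check that the replacement step is valid whenever $p.x\ne q.x$, even if $p.y=q.y$ is impossible for a pair separated by $h$. If $p.x=q.x$ pairs turn out not to be charged cleanly, I would fall back on bounding $f$ by the number of vertical collinear consecutive pairs. This is still at most $d_x$, and the slack between $0.2d$ and $0.6d$ leaves room for a cruder charging.
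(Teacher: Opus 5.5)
Your proof is correct and is essentially the paper's argument: your forced lines with $p.x=q.x$ are the paper's pinned lines, and your injection into consecutive same-column pairs is a cleaner justification of its bound $|H'_p|\le d_x$. The block counting is identical, and the concerns in your last paragraph do not arise, because the input points are distinct and so points in a common column have distinct $y$-coordinates. The only difference is the final step: the paper drops the $|H'_p|$ term to get $|V'|\le n-\tfrac32|H'_f|$ and balances this against $|H'|\le |H'_f|+d_x$, which yields only $0.4n+0.6d$, whereas you keep $-\tfrac f2\ge-\tfrac d2$ and bound both $|H'|$ and $|V'|$ below by $m$, which validly gives the sharper $0.4n+0.2d$.
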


Note that for a point set in general position we have $d=0$ and we recover the original Balance Lemma (Lemma~\ref{upper-bound}).

Before proving Lemma~\ref{gbl}, we show how it can be used in the algorithm, implying Theorem~\ref{thm1}. 
The degeneracy $d$ is easy to compute, the bound $0.4n + 0.6d$ can thus be assumed to be known to the algorithm, and we accordingly set $\alpha = 0.4 + 0.6d/n$. We then run the earlier algorithm without further changes. 

The runtime is again, dominated by the loop over all sets $H$ of size at most $\upalpha n$ (or symmetrically over all sets $V$ with a similar size bound). 
Note however, that the number of possible values to consider for separators ($H$ or $V$) is at most $n-d$, due to the overlaps between $x$- and $y$- coordinates. The total runtime can thus be upper bounded as $$n^{\fO(1)} \cdot \sum_{i=0}^{0.4 n + 0.6d} {n-d \choose i}.$$

The increase in the size of the sets considered is sufficiently compensated by the decrease in the number of places to choose from. Denoting $\delta = d/n$, the above quantity can be upper bounded as $$2^{n \cdot (1-\delta)H(\frac{0.4 + 0.6\delta}{1-\delta})},$$ by the previous estimate of binomial sums. 

When $\delta < 1/11$, we have $\frac{0.4 + 0.6\delta}{1-\delta} < 1/2$, and the exponent is easily seen to be decreasing with $\delta$, with maximum at $\delta = 0$, for an upper bound of $\fO(1.9602^n)$, as before. 

When $\delta \geq 1/11$, the sum goes past the middle binomial, and can be upper bounded by $2^{n-d}$, decreasing in $d$ and attaining its maximum $2^{n(1-\frac{1}{11})} \leq  1.878^n$,  within the required bound. 

It remains to prove the General Balance Lemma. 

\begin{proof}[Proof of Lemma~\ref{gbl}]
We follow a similar argument as in Lemma~\ref{upper-bound}. 

We take $(H^\prime, V^\prime)$ to be an optimal solution minimizing $|H'|$. (We no longer assume $|H'| \leq |V'|$, we run instead the same argument for the case of an optimal solution minimizing $|V'|$). 
Vertical strips, cells, and blocks, as well as $\upalpha_{i,\cB}$, $I(\cB)$, and  
$T(\cB)$ are defined identically as in the proof of Lemma~\ref{upper-bound}.

We now distinguish two types of lines in $H'$. We call \emph{pinned lines} those that uniquely separate a single colorful pair $(r,b)$, where $r.x = b.x$, and we call all other lines in $H'$ \emph{free lines}. Let $H'_p$ and $H'_f$ denote the set of pinned, respectively, free lines in $H'$. 

The earlier observation needs a slight adaptation. 

\begin{observation}\label{obs2}
Every line in $V^\prime\cup H^\prime$ uniquely separates at least one colorful pair. Every line in $H'_f$ uniquely separates at least two colorful pairs.
\end{observation}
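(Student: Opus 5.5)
The plan is to follow the proof of \cref{obs} almost verbatim, checking where the general position assumption was used and showing that the new definition of pinned lines absorbs exactly that use.

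For the first claim we argue as before. Suppose a line in $V'\cup H'$ uniquely separates no colorful pair. Then every colorful pair it separates is also separated by some other line of the solution. Removing the line therefore keeps the solution feasible and decreases $|H'|+|V'|$. This contradicts the optimality of $(H',V')$.

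For the second claim, take a free line $h\in H'_f$ and suppose it uniquely separates at most one colorful pair. By the first claim it uniquely separates exactly one pair $(r,b)$. Since $h$ is not pinned, this pair satisfies $r.x\neq b.x$. We then replace $h$ by a vertical line $v$ with $v$ strictly between $r.x$ and $b.x$, for instance just to the right of the smaller of the two coordinates, so that $v$ avoids all input $x$-coordinates.

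We check that $(H'\setminus\{h\}, V'\cup\{v\})$ is feasible. Every colorful pair other than $(r,b)$ that was separated by $h$ is, by uniqueness, also separated by some other line of $H'\cup V'$, and that line is still present. The pair $(r,b)$ is now separated by $v$. No other pair loses its separation, since only $h$ was removed.

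This new solution is again optimal, because the total size is unchanged, and it has strictly fewer horizontal lines. That contradicts the choice of $(H',V')$ as an optimal solution minimizing $|H'|$. Hence every free line uniquely separates at least two colorful pairs.

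The only delicate point is exactly the one flagged in \cref{obs}: we need a vertical line to exist that separates $(r,b)$. Here this is guaranteed precisely by the defining property of free lines, namely $r.x\neq b.x$. The pinned lines, which lack this property, are excluded from the second claim and will be handled separately in the counting argument for Lemma~\ref{gbl}.
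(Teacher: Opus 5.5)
Your proposal is correct and follows the paper's proof. The first claim comes from deleting a line that uniquely separates nothing. The second comes from replacing a free line that uniquely separates only one pair $(r,b)$ by a vertical line strictly between $r.x \neq b.x$, which contradicts the choice of $(H',V')$ as an optimal solution minimizing $|H'|$.
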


\begin{proof}
The proof of the first claim is unchanged from Observation~\ref{obs}. For the second claim, a free horizontal line that separates only one pair can similarly be replaced by a vertical line as earlier, when the general position assumption was made. 
\end{proof}

We observe that among $m$ points sharing the same $x$-coordinate, pinned lines can separate at most $m-1$ pairs. Thus the total number of pinned lines is $|H'_p| \leq n - n_x = d_x$.

The remainder of the proof proceeds in a familiar way. 

We observe that a horizontal line cannot uniquely separate a pair across multiple blocks (as those are also separated by vertical lines). By \cref{obs2}, summing over all blocks:  

\begin{center}
\(\displaystyle
\sum_\cB |I(\cB)| + \sum_\cB |T(\cB)| ~\ge~ 2|H^\prime_f| + |H'_p|.
\)
\end{center}

For the number $p(\cB)$ of points in $\cB$ it holds unchanged that
\begin{center}
\(\displaystyle
p(\cB) ~\ge~ |I(\cB)|+1+\frac{|T(\cB)|}{2}.
\)
\end{center}

Denoting the number of blocks as $n_\cB$, we have
\begin{center}
\(\displaystyle
\begin{aligned}
n 
& ~ \ge ~ n_\cB + \sum_\cB |I(\cB)| + \sum_\cB \frac{|T(\cB)|}{2} \\
& ~ \ge ~ n_\cB + \sum_\cB |I(\cB)| +  \frac{2|H^\prime_f| + |H'_p|-\sum_\cB |I(\cB)|}{2}.
\end{aligned}
\)
\end{center}

According to \cref{obs2}, we have $\sum_\cB |I(\cB)| \ge |H^\prime|$. As there are $|V^\prime| + 1$ strips, $n_\cB \ge |V^\prime|+1$, and therefore
\begin{center}
\(\displaystyle
\begin{aligned}
n
& ~ \ge ~ |V^\prime| + 1  + \frac{|H^\prime|}{2} + |H^\prime_f|+ \frac{|H^\prime_p|}{2}  \\
&~ = ~ |V^\prime| + 1  + \frac{3|H^\prime_f|}{2} + |H^\prime_p|,
\end{aligned}
\)
\end{center}

where we used $|H'| = |H'_p| + |H'_f|$.
Dropping negative terms, it follows that $$|V'| ~ \leq ~ n - \frac{3|H'_f|}{2},$$

and therefore $$\min\{|H'|,|V'|\} ~ \leq ~ \min\Big\{|H'_f| + d_x, ~n -\frac{3|H'_f|}{2} \Big\}.$$

Since the two terms inside the minimum-function are both monotone in $|H'_f|$ in the opposite way and they achieve equality at $|H'_f| = \frac{2}{5}(n-d_x)$, we can observe the bound

$$\min\{|H'|,|V'|\} \leq 0.4n + 0.6d_x.$$

Repeating the entire argument with $H'$ and $V'$ swapped (i.e., for an optimal solution minimizing $|V'|$), we obtain a similar bound with $d_y$ instead of $d_x$.
It thus holds that there is an optimal solution $(H',V')$ with 
$\min\{|H'|,|V'|\} \leq 0.4n + 0.6 \min\{d_x, d_y\}$,
proving Lemma~\ref{gbl}.
\end{proof}

\section{Algorithm for point separation}\label{sec3}

The input to the point separation problem is a set $P \subset \mathbb{R}^2$ of points in the plane, where $|P| = n$. The task is to find two sets of lines $H,V \subset \mathbb{R}$, that together separate input points as follows. For every pair $r,b \in P$ there must be some $v \in V$ such that $r.x < v < b.x$ or $b.x < v < r.x$ holds, or some $h \in H$ such that $r.y < h < b.y$ or $b.y < h < r.y$ holds.

We reuse terminology from \S\,\ref{sec2} and observe that Algorithm~\ref{alg:greedy-algorithm} and Algorithm~\ref{alg:baseline-algorithm} can be adapted to point separation with minimal modifications, yielding a runtime of $\fO^*(2^n)$ or $\fO^*(n^{k/2})$.

Our improved result (Theorem~\ref{thm2}) again follows from the observation that one side of the solution can be assumed small. In this case, however, we can show our bound to be tight, through an infinite family of instances.

\begin{lemma}\label{upper-bound-n-color}
    For every instance $P$ of size $n$ in general position for point separation, there exists an optimal solution $(H,V)$ with $\min\{|H|,|V|\}\le \frac{n-1}{3}$.
\end{lemma}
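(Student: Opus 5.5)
The plan is to follow the proof of the Balance Lemma (Lemma~\ref{upper-bound}). The extra structure of point separation, namely that every cell contains at most one point, should sharpen the counting from $5|H'|/2$ to $3|H'|$.

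Among all optimal solutions, choose $(H',V')$ whose smaller side is as small as possible, and assume without loss of generality that $|H'|\le |V'|$. As in Observation~\ref{obs}, every separator uniquely separates at least one pair of points, by optimality. Every $h\in H'$ uniquely separates at least two pairs. Otherwise, replace $h$ by a vertical line separating its single pair, which exists by general position. This keeps the total size and lowers $|H'|$ by one while raising $|V'|$ by one. Since $|H'|\le|V'|$, the new smaller side has size $|H'|-1$, contradicting the choice of $(H',V')$.

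Next, localize unique separations. If $h$ is the only separator between $p$ and $q$, then $p,q$ lie in the same vertical strip, and no other horizontal line lies between them. So $p$ and $q$ occupy the two cells of that strip directly adjacent to $h$, one above and one below. Each cell holds at most one point, so within a single strip $h$ uniquely separates at most one pair. Hence the two pairs guaranteed above lie in two different strips.

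Call $h$ \emph{active} in a strip if both cells of that strip adjacent to $h$ are nonempty. Then every $h\in H'$ is active in at least two strips. In a strip containing $p_s$ points arranged in $b_s$ blocks, every nonempty cell holds exactly one point. So the number of active lines in that strip is exactly $p_s-b_s$, since each block of $c$ consecutive nonempty cells contributes $c-1$ active lines.

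Summing over strips gives $n-n_\cB\ge 2|H'|$, where $n_\cB$ is the number of blocks. Every one of the $|V'|+1$ strips is nonempty: otherwise a vertical line bounding an empty strip separates no pair and could be removed. Hence $n_\cB\ge |V'|+1\ge |H'|+1$. Combining, $n\ge 3|H'|+1$, which gives $\min\{|H'|,|V'|\}\le (n-1)/3$.

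The step I expect to need the most care is the claim that a horizontal line uniquely separates at most one pair per strip, together with the exact count $p_s-b_s$. Both rest on the one-point-per-cell property, which is what distinguishes this setting from optimal discretization and yields $3$ in place of $5/2$. I would also double-check the boundary case $|H'|=|V'|$ in the exchange step.
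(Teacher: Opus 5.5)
Your proof is correct and follows essentially the same route as the paper. It uses the same extremal choice and exchange argument, the same observation that one point per cell forces each horizontal line to uniquely separate pairs in at least two different strips (blocks), and the same count $n \ge n_\cB + 2|H'| \ge 3|H'|+1$. Your per-strip identity (number of active lines $= p_s - b_s$) is just the paper's per-block bound $p(\cB) \ge |I(\cB)|+1$ summed over blocks. You also justify two points the paper leaves implicit: every strip is nonempty, so $n_\cB \ge |V'|+1$, and the exchange still lowers the smaller side when $|H'| = |V'|$. Both points check out.
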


Lemma~\ref{upper-bound-n-color} implies Theorem~\ref{thm2} by setting $\alpha = 1/3$ and using the earlier estimate for the binomial sum with $2^{H(1/3)} \leq 1.8899$.

\begin{proof}[Proof of Lemma~\ref{upper-bound-n-color}]
We follow a similar argument as in Lemma~\ref{upper-bound}, taking $(H^\prime, V^\prime)$ to be an optimal solution that minimizes the smaller side, and assuming, without loss of generality, $|H'| \leq |V'|$. 
Vertical strips, cells, and blocks, and $\upalpha_{i,\cB}$ and $I(\cB)$ are defined as before. 

We now understand ``colorful pairs'' as all pairs of points that must be separated. With this remark, 
\cref{obs} holds unchanged and the remainder of the proof simplifies. 

Now each cell contains at most one point, so every horizontal line, since it must uniquely separate at least two pairs (by \cref{obs}), must intersect at least two blocks:

\begin{center}
\(\displaystyle
|\{\cB : i\in I(\cB)\}| ~\ge~ 2.
\)
\end{center}

As there are $|H^\prime|$ horizontal lines, we get

\begin{center}
\(\displaystyle
\sum_\cB |I(\cB)| ~\ge~ 2|H^\prime|.
\)
\end{center}

In block $\cB$ there are at least $|I(\cB)| + 1$ nonempty cells and these contribute at least $|I(\cB)| + 1$ points. Denoting the number of points in $\cB$ as $p(\cB)$, we have

\begin{center}
\(\displaystyle
p(\cB) ~\ge~ |I(\cB)|+1.
\)
\end{center}

Denoting the number of blocks as $n_\cB$, and using $n_\cB \geq |V'|+1$, we have

\begin{center}
\(\displaystyle
\begin{aligned}
n
& ~ \ge ~ n_\cB + \sum_\cB |I(\cB)|  \\
&~ \ge ~ |V^\prime| + 1  + 2|H^\prime| \\
&~ \ge 3|H^\prime| + 1,
\end{aligned}
\) 
\end{center}

implying $\min\{|H'|,|V'|\} \leq \frac{n-1}{3}.$
\end{proof}

\begin{lemma}\label{lower-bound-n-color}
    For every $n=3k+1$ 
    there is a point separation instance with $n$ points in general position such that every optimal solution $(H,V)$ satisfies $\min\{|H|,|V|\}\ge \frac{n-1}{3}$.
\end{lemma}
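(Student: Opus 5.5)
We give an explicit construction made of $k$ small gadgets placed along the diagonal, in the spirit of Lemma~\ref{lower-bound}. Each gadget contributes three points and one anchor point $p_0$ is added, so $n=3k+1$. We then show three things: every feasible solution needs at least $2k$ lines, a solution with exactly $k$ horizontal and $k$ vertical lines exists, and no solution of size $2k$ can be unbalanced. Together these give $|H|=|V|=k=\frac{n-1}{3}$ for every optimal solution. This matches Lemma~\ref{upper-bound-n-color}. Note that the trivial cell-count bound $(|H|+1)(|V|+1)\ge n$ is far too weak here, so the lower bound must come from local gadget arguments.

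\emph{Construction and upper bound.} A first candidate is $p_0=(0,0)$ and, for $i=1,\dots,k$, $a_i=(3i-2,3i-1)$, $b_i=(3i-1,3i)$, $c_i=(3i,3i-2)$. These points are in general position. Gadget $i$ lives in the \emph{window} $x,y\in[3i-2,3i]$, and distinct windows are disjoint. For the feasible solution, put one vertical line at $x=3i-1.5$ and one horizontal line at $y=3i-1.5$ in every gadget. A point can then be encoded by the pair (number of vertical lines to its left, number of horizontal lines below it). This gives $a_i\mapsto(i-1,i)$, $b_i\mapsto(i,i)$, $c_i\mapsto(i,i-1)$ and $p_0\mapsto(0,0)$, which are pairwise distinct. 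Hence this is a solution with $|H|=|V|=k$.

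\emph{At least $2k$ lines.} Within gadget $i$, the pairs $(a_i,b_i)$, $(b_i,c_i)$ and $(a_i,c_i)$ have both their $x$-interval and their $y$-interval inside the open window. So any line separating them lies in the window of gadget $i$. A single line cannot separate all three points of the gadget from each other, so at least two lines lie in each window. Since the windows are disjoint, every solution has at least $2k$ lines. Combined with the construction, every optimal solution has exactly two lines per window and no lines elsewhere.

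\emph{Balancing, the main obstacle.} It remains to show that no window contains two lines of the same orientation. The intended argument mirrors Lemma~\ref{lower-bound}:
\begin{itemize}
\item if gadget $i$ uses two vertical lines, some pair formed by a gadget point and a point of a neighbouring gadget (or $p_0$) should have its $x$-interval avoid all lines and its $y$-interval contain no horizontal line;
\item symmetrically, if gadget $i$ uses two horizontal lines, some such pair should escape separation.
\end{itemize}
With the coordinates above this check fails in one case. If gadget $i$ uses two horizontal lines, the pair $(c_i,a_{i+1})$ is still separated, because its $y$-interval $(3i-2,3i+2)$ contains one of those horizontal lines. So the gadget must be tuned.

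The fix I would try first is to permute the $x$- and $y$-orders inside each gadget, and to use the anchor $p_0$ and the neighbouring gadgets as witnesses, exactly as $p_4$, $p_5$ and $p_7$ serve as witnesses in Lemma~\ref{lower-bound}. The target properties are:
\begin{itemize}
\item the three points still force two lines inside the window;
\item there is a point of the next gadget that lies diagonally adjacent in the grid to one gadget point, so that any all-vertical choice merges it with that gadget point;
\item there is a point of the previous gadget (or $p_0$) that plays the same role for any all-horizontal choice.
\end{itemize}

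Once a gadget with these properties is found, the argument is local and the per-window verification is a finite case check. Every window then carries exactly one horizontal and one vertical line, which proves the lemma.
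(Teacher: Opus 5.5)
Your counting is correct: each window needs two lines, a solution with $k$ lines of each orientation exists, so every optimal solution has exactly two lines per window and none elsewhere. The gap is the balancing step, which is the real content of the lemma. After the witness $(c_i,a_{i+1})$ fails in the all-horizontal case, you say the gadget ``must be tuned'' and leave the tuned gadget unspecified. As written, the proposal therefore does not establish $\min\{|H|,|V|\}\ge k$.

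Your diagnosis is also wrong. One failed witness does not make the instance fail, and your instance does work, just not by a local check. Call a line in window $i$ \emph{low} if its coordinate lies in $(3i-2,3i-1)$ and \emph{high} if it lies in $(3i-1,3i)$.
\begin{itemize}
\item A high vertical line together with a low horizontal line leaves $(a_i,b_i)$ unseparated.
\item The pairs $(p_0,a_1)$ and $(p_0,c_1)$ force a low horizontal line and some vertical line in window~$1$. So window~$1$ carries a low vertical and a low horizontal line.
\item If window $i$ has no high vertical line, then $(b_i,a_{i+1})$ forces a low horizontal line in window $i+1$, and $(b_i,c_{i+1})$ forces a vertical line there. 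Window $i+1$ is therefore again low vertical plus low horizontal.
\end{itemize}
Induction gives $|H|=|V|=k$. Your expectation that the check is local is false for this gadget, though. An all-horizontal window $i$ is compatible with both neighbours when window $i-1$ has a high vertical and a high horizontal line and window $i+1$ has a low vertical and a low horizontal line. It is excluded only by propagating back to $p_0$.

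The paper tunes the gadget exactly as your fallback plan asks. It uses copies $\{(3i-2,3i-2),(3i-1,3i),(3i,3i-1)\}$ and places the extra point at the top-right end, $(3k+1,3k+1)$, instead of at the origin. The first point $(3i+1,3i+1)$ of the next copy (or the extra point) is then the bottom-left point of its own window. In an optimal solution, only lines of copy $i$ can separate it from copy $i$:
\begin{itemize}
\item two horizontal lines in copy $i$ leave it unseparated from $(3i-1,3i)$;
\item two vertical lines leave it unseparated from $(3i,3i-1)$.
\end{itemize}
This makes balance a purely local statement per copy; it even forces both lines into the high gaps. With your coordinates, the same conclusion needs the global induction from the anchor.
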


\begin{figure}[htbp]

    \centering 
    \includegraphics[width=0.5\textwidth]{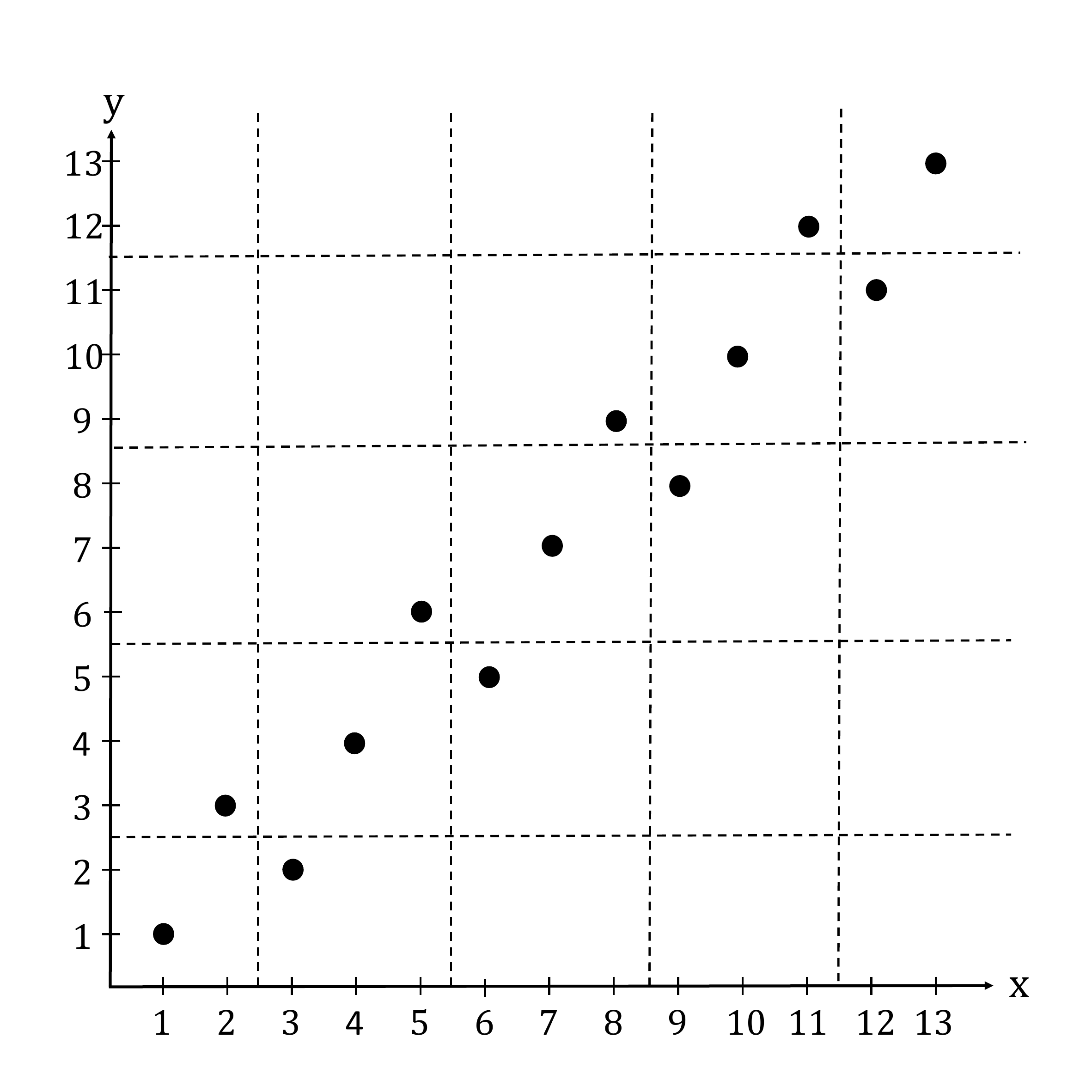} 
    \caption{An instance with $k=4$; points shown as dots. Dashed lines represent a feasible solution.} 
    \label{n-color-instance} 
\end{figure}

\begin{proof}[Proof of Lemma~\ref{lower-bound-n-color}]
For any integer $k\ge 1$, we construct an instance with $n = 3k+1$ points, as shown in \cref{n-color-instance}. We start from a base instance with $3$ points: $\{(1,1), (2,3), (3,2)\}$, placed in $k$ shifted copies along the diagonal with the $i$-th copy defined as $\{(3i-2,3i-2), (3i-1,3i), (3i,3i-1)\}$. Finally, place
one extra point $(3k+1,3k+1)$. To separate the points we need at least $2$ lines per copy. If we choose two horizontal lines, then the point $(3i-1, 3i)$
and the first point of the next copy (or the final point) are not separated. Similarly, if we choose two vertical lines, then the point $(3i, 3i-1)$
and the first point of the next copy (or the final point) are not separated. In these two cases, one more line is required between every two adjacent
copies. Therefore, every optimal solution requires one horizontal line and one vertical line per copy, and hence $\min\{|H|, |V|\}=k=\frac{n-1}{3}.$  
\end{proof}

\subsection{Removing the general position assumption}

Similarly to \S\,\ref{sec2}, we can extend our approach to the case where pairs of points can share a coordinate, although with a slight increase in the running time.

\begin{lemma}[General Balance Lemma]\label{gb2}
For every point separation instance $P$ of size $n$, there exists an optimal solution $(H,V)$ with $\min\{|H|,|V|\}\le n/3  + 2d/3$, where $d = \min\{d_x,d_y\}$ is the degeneracy of the input. 
\end{lemma}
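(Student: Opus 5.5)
The plan is to adapt the proof of Lemma~\ref{upper-bound-n-color} in the same way that Lemma~\ref{gbl} adapted Lemma~\ref{upper-bound}. Let $(H',V')$ be an optimal solution that minimizes $|H'|$; the symmetric case, minimizing $|V'|$, will give the bound with $d_y$. As in the proof of Lemma~\ref{gbl}, call a line of $H'$ \emph{pinned} if it uniquely separates exactly one pair $(p,q)$ and that pair has $p.x=q.x$. All other lines of $H'$ are \emph{free}. Write $H'_p$ and $H'_f$ for the two classes.

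\textbf{Bounding the pinned lines.} Among $m$ points on a common vertical line, the pinned lines can uniquely separate at most $m-1$ pairs. This gives $|H'_p|\le d_x$.

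\textbf{Free lines.} A free line that uniquely separates only one pair $(p,q)$ must have $p.x\neq q.x$. So it could be replaced by a vertical line separating that pair, which contradicts the minimality of $|H'|$. Hence every free line uniquely separates at least two pairs. Every cell contains at most one point. A line that uniquely separates two pairs inside a single block would have to separate two single-point cells in two different ways, which is impossible: within a block, the line has exactly one cell directly above and one directly below. Therefore every free line is counted in $I(\cB)$ for at least two blocks $\cB$. Every pinned line is counted for at least one block.

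\textbf{Counting.} These facts give
\[
\sum_\cB |I(\cB)| \ge 2|H'_f|+|H'_p|.
\]
As before, $p(\cB)\ge |I(\cB)|+1$ and $n_\cB\ge |V'|+1$. Summing over blocks,
\[
n \ge |V'|+1+2|H'_f|+|H'_p|,
\]
so $|V'|\le n-2|H'_f|$. Combined with $|H'|\le |H'_f|+d_x$, this yields
\[
\min\{|H'|,|V'|\}\le \min\{|H'_f|+d_x,\; n-2|H'_f|\}.
\]
The first term increases and the second decreases in $|H'_f|$. They are equal at $|H'_f|=(n-d_x)/3$, so the minimum is at most $(n-d_x)/3+d_x=n/3+2d_x/3$. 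Swapping the roles of $H$ and $V$ gives the same bound with $d_y$, and taking the better of the two gives $d=\min\{d_x,d_y\}$.

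\textbf{Main obstacle.} The delicate step is justifying that a free line uniquely separates pairs in at least two different blocks. With shared coordinates, blocks are still maximal runs of nonempty cells and each cell holds at most one point. Within one block, the only pair a horizontal line can uniquely separate is the pair formed by the single points in the cells immediately above and below it. Any other pair of points in that block lies in cells separated by some other horizontal line as well, so it is not uniquely separated. One should also double-check that distinct blocks in the same strip, split by an empty cell, still count as different blocks. Nothing in the counting required them to lie in different strips, so this should be fine.
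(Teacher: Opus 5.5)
Your proposal is correct and matches the paper's proof step for step: the pinned/free split with $|H'_p|\le d_x$, the exchange argument forcing each free line to uniquely separate pairs in at least two blocks, the count $n\ge |V'|+1+2|H'_f|+|H'_p|$, and balancing $|H'_f|+d_x$ against $n-2|H'_f|$ at $|H'_f|=(n-d_x)/3$, followed by swapping roles to get $d_y$. Your explicit argument that a horizontal line can uniquely separate at most one pair per strip, because cells hold at most one point, fills in a step the paper only asserts.
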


\begin{proof}[Proof of Lemma~\ref{gb2}]

We follow the proof of Lemma~\ref{upper-bound-n-color}, with the same notation, and minimal changes. 

Let $(H^\prime, V^\prime)$ be an optimal solution minimizing $|H'|$. (We do not assume $|H'| \leq |V'|$, we run instead the same argument for the case of an optimal solution minimizing $|V'|$.) 
We define \emph{pinned lines} and \emph{free lines} similarly to the proof of Lemma~\ref{gbl}, and we reuse the modified Observation~\ref{obs2}.

Now every free horizontal line must uniquely separate at least two pairs, so it must intersect at least two blocks; this implies

\begin{center}
\(\displaystyle
\sum_\cB |I(\cB)| ~\ge~ 2|H^\prime_f| + |H'_p|.
\)
\end{center}

In block $\cB$ there are at least $|I(\cB)| + 1$ nonempty cells and these contribute at least $|I(\cB)| + 1$ points. Denoting the number of points in $\cB$ as $p(\cB)$, we have

\begin{center}
\(\displaystyle
p(\cB) ~\ge~ |I(\cB)|+1.
\)
\end{center}

Denoting the number of blocks as $n_\cB$, and using $n_\cB \geq |V'|+1$, we have

\begin{center}
\(\displaystyle
\begin{aligned}
n
& ~ \ge ~ n_\cB + \sum_\cB |I(\cB)|  \\
&~ \ge ~ |V^\prime| + 1  + 2|H'_f| + |H'_p|.
\end{aligned}
\) 
\end{center}

Dropping negative terms, we obtain 

$$|V'| \leq n - 2|H'_f|,$$

and combining with the earlier upper bound on $H'_p$, we obtain 
$$\min\{|H'|, |V'|\} \leq \min\{|H'_f| + d_x, n - 2|H'_f|\}.$$

Since the two terms are monotone in $|H'_f|$ in opposite ways and achieve equality at $|H'_f| = \frac{1}{3}(n-d_x)$, we obtain the bound 
$$\min\{|H'|,|V'|\} \leq n/3 + 2d_x / 3.$$

Repeating the argument with $H'$ and $V'$ swapped (i.e., for an optimal solution minimizing $|V'|$), we obtain a similar bound with $d_y$ instead of $d_x$. Thus, for some optimal solution $(H',V')$ we have $\min\{|H'|,|V'|\} \leq n/3 + 2  \min\{d_x,d_y\} / 3$, proving Lemma~\ref{gb2}. 
\end{proof}

The total runtime can now be expressed as $$n^{\fO(1)} \cdot \sum_{i=0}^{n/3 + 2d/3} {n-d \choose i},$$ which can be upper bounded, denoting $\delta = d/n$, as $$2^{n \cdot (1-\delta)H(\frac{1/3 + 2\delta/3}{1-\delta})}.$$ 

This quantity is maximized, in the valid range of $\delta < 1/7$, at $\delta \approx 0.01247$, yielding the upper bound $\fO(1.8906^n)$. When $\delta \geq 1/7$, we use the upper bound $2^{n-d}$ instead, attaining its maximum $2^{n(1-\frac{1}{7})} \leq 1.812^n$, within the required bound. This completes the proof of Theorem~\ref{thm2}.

\section{Conclusions}

The most interesting structural question left open by our work is the gap between $0.3n$ and $0.4n$ for the optimal cardinality of the smaller of the horizontal and vertical separator-sets in the two-colored case; an optimal bound of $n/3$, matching the monochromatic case would seem plausible. Another remaining small gap is for the point separation problem between the bounds with or without the general position assumption.

Further improving the running times of our algorithms, possibly by different techniques than those used here, remains an interesting direction; note that the possibility of an algorithm with runtime of the form $2^{o(n)}$ has not been ruled out.

A different question of a flavor similar to those we studied is to separate points with horizontal and vertical line \emph{segments} that form a \emph{rectangulation}. This question would likely require different techniques, and is meaningful both under minimization of the number of line segments and minimization of the total line length. 

\subparagraph{Disclosure about use of AI.} The authors used ChatGPT 5.5 and Claude Opus 4.7 both to explore the problem structure, and to verify arguments. The writeup is the authors' own and they assume full responsibility for the contents.

\newpage

\bibliography{lipics-v2021-sample-article}

\end{document}